\DeclareMathOperator{\diag}{diag}
\newtheorem{problem}{Problem}[section]
\newtheorem{proposition}{Proposition}[section]
\newtheorem{remark}{Remark}[section]
\newtheorem{theorem}{Theorem}[section]
\title{\LARGE \bf Quaternion-based attitude stabilization via discrete-time IDA-PBC}
\author{M. Mattioni$^1$, A. Moreschini$^1$, S. Monaco$^1$ and D. Normand-Cyrot$^2$
\thanks{Supported by \emph{Sapienza Universit\`a di Roma} (\emph{Progetti di Ateneo 2021}-Piccoli progetti \emph{RP12117A7B268663}).}
\thanks{$^1$Dipartimento di Ingegneria Informatica, Automatica e Gestionale \emph{A. Ruberti} (Università degli Studi di Roma \emph{La Sapienza}); Via Ariosto 25, 00185 Rome, Italy {\tt\small {\{mattia.mattioni, salvatore.monaco\}@uniroma1.it}}.
}%
\thanks{$^2$Laboratoire de Signaux et Syst\`emes (L2S, CNRS, Sup\' elec and \emph{Univ. Paris-Saclay}); 3, Rue Joliot Curie, 91192, Gif-sur-Yvette, France {\tt\small  dorothee.normand-cyrot@centralesupelec.fr}}%
}
\begin{document}

\maketitle
\thispagestyle{empty}
\pagestyle{empty}

\begin{abstract} 
In this paper, we propose a new sampled-data controller for stabilization of the attitude dynamics at a desired constant configuration. The design is based on discrete-time interconnection and damping assignment (IDA) passivity-based control (PBC) and the recently proposed Hamiltonian representation of discrete-time nonlinear dynamics. Approximate solutions are provided with simulations illustrating performances.
\end{abstract}

\begin{keywords}
Sampled-data control, control applications, aerospace.
\end{keywords}

\section{Introduction}

Attitude stabilization aims at orienting a physical object in relation with a specified inertial frame of reference and thus it applies in several engineering contexts such as aerospace or aerial robotics. 
Over the past twenty years numerous continuous-time quaternion-based attitude control laws have been proposed based on several design methodologies such as feedback-linearization \cite{kim2007robust,navabi2017spacecraft}, backstepping \cite{mino2019backstepping}, sliding mode \cite{zhu2011adaptive}, optimal control or learning \cite{wen1991attitude, hassrizal2016survey}. Among these, the ones relying upon passivity-based control (PBC) are outstanding as providing, in general, simple linear control laws \cite{lizarralde1996attitude, tsiotras1998further, di2003passive} in the form of P(I)D \cite{qasim2017pid}. 
Such methods involve different parametrizations of the attitude \cite{hughes2012spacecraft} as for instance Euler angles, the Euler-Rodrigues parameters, axis-angle, etc. Among these, the ones involving quaternions are of interest and widely used motivated by the lack of singularities in general and the fact that they are computationally less intense  \cite{wen1991attitude,kim2007robust}. However, even if quaternions can represent all possible attitudes, this representation is not unique: each attitude corresponds to two different quaternions. This typically gives rise to undesirable phenomena such as unwinding \cite{bhat2000topological, chaturvedi2011rigid}.

Very few results are available for addressing the design under sampling, that is when measurements are sampled and the input is piecewise constant  \cite{monaco2007advanced}. A first solution was proposed in \cite{monaco1986linearizing} based on sampled-data multi-rate feedback linearization. Such a solution requires a preliminary continuous-time control making the model finitely discretizable. Accordingly, if on one side such a control ensures one step convergence to the desired attitude, it requires a notable control effort and lacks in robustness with respect to unmodelled dynamics. In  \cite{mattioni2017immersion}, based on the Euler-Rodrigues kinematic model, a different  multi-rate digital control is designed in the context of Immersion and Invariance while relaxing the need of a preliminary continuous-time feedback.
In \cite{jiang2017sampled}, a single-rate sampled-data controller is proposed when solving a LQR problem on the approximate linear model at the desired attitude using quaternions. Such a feedback ensures local stabilization of the closed loop provided a set of LMIs is solvable off-line for a fixed value of the model parameters, the desired attitude and the sampling period. 

The contribution of this work stands in providing a new scalable digital control law involving single-rate sampling and quaternion description of the kinematics. The solution we propose is based on discrete-time Interconnection and Damping Assignment (IDA)-PBC\cite{moreschini2020stabilization} over the sampled-data equivalent model associated to the attitude dynamics with the aim of assigning a suitably defined discrete port-controlled Hamiltonian (pcH) structure \cite{moreschini2019discrete, 9749871}. Accordingly, the control is constructively proved to be the solution to a discrete matching equality. Even if exact closed-form solutions to this equality are tough to compute in practice, a recursive algorithm is proposed for computing approximate controllers at all desired orders.  We underline that, as a byproduct and minor contribution, we also provide a new continuous-time PBC controller for attitude stabilization.

\smallskip 
The rest of the paper is organized as follows. In Section \ref{sec:ct_st}, the attitude control problem is recalled and solved via continuous-time PBC. In Section \ref{sec:pb_sd}, the problem is formally set in a digital context over the sampled-data equivalent model of the attitude equation. The main result is in Section \ref{sec:main_sect} with comparative simulations in Section \ref{sec:sim}. Section \ref{sec:conclusions} concludes the paper with future perspectives.
\smallskip

\subsubsection*{Notations} 
Given a differentiable function $V:\mathbb{R}^n\to\mathbb{R}$, $\nabla V(\cdot)$ represents the gradient column-vector with $\nabla= \text{col}\{ \frac{\partial}{\partial x_i}\}_{i={1, \dots, n}}$ and $\nabla^2 V(\cdot)$ its Hessian. 
For $v, w \in \mathbb{R}^n$, the discrete gradient $\bar{\nabla}V|_{v}^{w}: \mathbb{R}^n \times \mathbb{R}^n \to \mathbb{R}^n$ is defined as  
\begin{align*} \bar{\nabla} V|_{v}^{w}= \int_0^1 \!\nabla V(v+s(w-v)) \ \textrm{d} s
\end{align*}
satisfying
$V(w)-V(v)=	(w-v)^\top\bar{\nabla}V|_{v}^{w}$
with $\bar{\nabla}V|_{v}^{v}={\nabla}V(v)$.
	 $I_n$ (or $I$ when clear from the context) and $I_d$ denote respectively the identity matrix of dimension $n$ and identity operator. $\mathbf{0}$ denotes the zero matrix of suitable dimensions. 
	Given a smooth (i.e., infinitely differentiable) vector field $f(\cdot)$ over $\mathbb{R}^n$,  $\mathrm{L}_{f}=\sum_{i=1}^{n}f_i(x)\frac{\partial}{\partial x_i}$ is the Lie operator with, recursively, $\mathrm{L}^i_f = \mathrm{L}_f \mathrm{L}_f^{i-1}$ and $\mathrm{L}_f^0 = I_d$. The exponential Lie series operator is defined as
	  $e^{\mathrm{L}_f}=I_d + \sum_{i\geq 1}\frac{1}{i!}\mathrm{L}_{f}^i.$  A function $R(x,\delta): \mathcal{B} \times \mathbb{R}\to \mathbb{R}^n$ is said in $O(\delta^p)$, with $p \geq 1$, if it can be written as $R(x, \delta) = \delta^{p-1}\tilde R(x, \delta)$ for all $x \in \mathcal{B}$ and there exist a function $\theta \in \mathcal{K}_{\infty}$ and $\delta^* >0$ s.t. $\forall \delta\leq \delta^*$, $| \tilde R (x, \delta)| \leq \theta(\delta)$. The symbols $>$ and $<$ denote  positive and negative definite functions whereas $\prec$ and $\succ$ ($\preceq$ and $\succeq$) positive and negative (semi) definite matrices.
	  Given a vector $\omega = \text{col}\{\omega_x, \omega_y, \omega_z \} \in \mathbb{R}^3$, we denote 
	  \begin{align*}
	      S(\omega) = \begin{pmatrix}
                        0 & -\omega_z & \omega_y \\
                        \omega_z & 0 & - \omega_x\\
                        -\omega_y & \omega_x & 0
	                \end{pmatrix} = -S^\top (\omega).
	  \end{align*}
	 Given two matrices $A \in \mathbb{R}^{n_1\times n_2}$ and $B \in \mathbb{R}^{m_1\times m_2}$, $
A \otimes B $ denotes the Kronecker product. 
Given a matrix $A  \in\mathbb{R}^{n\times n}$, with elements $a_{ij}\in \mathbb{R}$ (for $i,j = 1, \dots, n$), $\text{vec}(A) := \text{col}\{ 
a_{11}, \dots,  a_{1n}, 
\dots, 
a_{n1},  \dots,  a_{nn}
\}\in \mathbb{R}^{n^2}$ denotes the vectorization so that
$
Ab = (I \otimes b^\top) \text{vec}(A)
$
for $b \in \mathbb{R}^n$.

\section{Quaternion-based attitude equations and control: the continuous-time case}
\label{sec:ct_st}
In the following, let the kinematics and dynamics of a rigid body orientation in the inertial coordinates be described by
\begin{subequations}\label{eq:quat_d}
\begin{align}
    \dot q_0 =& \frac{1}{2}S(\omega) q_0 + \frac{1}{2}\omega q_r \label{eq:quat_d_a}, \quad 
    \dot q_r = - \frac{1}{2}\omega^\top q_0 \\
    M \dot \omega =& S(\omega) M \omega + u \label{eq:quat_d_c}
\end{align}
\end{subequations}
where $q = \text{col}\{q_0,\ q_r \}\in \mathbb{R}^3 \times \mathbb{R}$ is the quaternion vector (also known as Euler parameters) verifying
\begin{align}
    q^\top q = q_0^\top q_0 + q_r^2 = 1, 
\end{align}
$\omega \in \mathbb{R}^3$ is the angular velocity,  $u\in \mathbb{R}^3$ the input torque and  $M = M^\top \succ 0$ the inertia matrix.

Let $q^\star = \text{col}\{ q_{0}^{\star}, q_{r}^{\star}\}$ be a desired attitude configuration. Then, the attitude control problem above can be addressed over the error quaternion defined as $\varepsilon = \text{col}\{\varepsilon_0, \varepsilon_r \}$ \cite{hughes2012spacecraft}
\begin{align*}
    \varepsilon_0 = & q_r q_{0}^{\star} +  q_{r}^{\star} q_0 + S(q_0)q_{0}^{\star}, \quad \varepsilon_r =  q_r q_r^\star - q_0^\top q_0^\star
\end{align*}
with the corresponding dynamics reading
\begin{subequations}\label{eq:quat_d_err}
\begin{align}
    \dot \varepsilon_0 =& \frac{1}{2}S(\omega) \varepsilon_0 + \frac{1}{2}\omega \varepsilon_r \label{eq:quat_d_err_a}, \quad 
    \dot \varepsilon_r = - \frac{1}{2}\omega^\top \varepsilon_0 \\
    M \dot \omega =& S(\omega) M \omega + u. \label{eq:quat_d_err_c}
\end{align}
\end{subequations}
Accordingly, asymptotically stabilizing $q^\star$ for \eqref{eq:quat_d}  corresponds to asymptotically stabilizing \eqref{eq:quat_d_err} at the equilibrium
\begin{align}\label{eq:e_star}
    \varepsilon^\star = \begin{pmatrix}
    \mathbf{0}^\top & 
    1
    \end{pmatrix}^\top, \quad \omega^\star = \mathbf{0}.
\end{align}

Among the solutions available in continuous time, the ones relying upon PBC yield a family of linear control laws \cite{lizarralde1996attitude, tsiotras1998further} of the form
\begin{align}\label{eq:lin_fed_ct}
    u = -\kappa_{0} \varepsilon_0 - \kappa_\text{di}\omega, \quad \kappa_0, \kappa_\text{di} \succ 0.
\end{align}
In the following, we endorse such controllers with an IDA-PBC specification \cite{ortega2002interconnection}. 
For, we note that \eqref{eq:quat_d_err} exhibits the conservative pcH form 
\begin{align*}
    \begin{pmatrix}
    \dot \varepsilon_0 \\
    \dot \varepsilon_r \\
    M \dot \omega
    \end{pmatrix} = \begin{pmatrix}
    \frac{1}{2}S(\omega) & \frac{1}{2}\omega & \mathbf{0}  \\
    -\frac{1}{2}\omega^\top & 0 & \mathbf{0} \\
    \mathbf{0} & \mathbf{0} & S(\omega)
    \end{pmatrix}\begin{pmatrix}
    \varepsilon_0 \\
    \varepsilon_r \\
    M  \omega
    \end{pmatrix} + \begin{pmatrix}
    \mathbf{0}\\
    0 \\
    I
    \end{pmatrix}u
\end{align*}
with quadratic Hamiltonian and interconnection matrix
\begin{align*}
    H(\varepsilon, \omega)\!\! =&\!\! \frac{1}{2}\big(\varepsilon^\top \varepsilon + \omega^\top M \omega)
,\ 
    J(\omega)\!\!\! =\!\!\! \begin{pmatrix}
    \frac{1}{2}S(\omega) & \frac{1}{2}\omega & \mathbf{0}  \\
    -\frac{1}{2}\omega^\top & 0 & \mathbf{0} \\
    \mathbf{0} & \mathbf{0} & S(\omega)
    \end{pmatrix}.
\end{align*}
On this basis, the following result holds. 
\begin{proposition}
\label{prop:ct_ida}
Consider the error dynamics \eqref{eq:quat_d_err} and the linear feedback \eqref{eq:lin_fed_ct} making \eqref{eq:e_star} asymptotically stable. 
When $
    \kappa_0 =\frac{1}{2}M^{-1},
$
\eqref{eq:lin_fed_ct} is an IDA-PBC feedback of the form
\begin{equation}\label{eq:uida_ct}
\begin{split}
    & u_\text{ida}(\varepsilon, \omega) = u_\text{es}(\varepsilon) + u_\text{di}(\omega)\\
    & u_\text{es}(\varepsilon) = - \frac{1}{2}M^{-1}\varepsilon_0, \quad u_\text{di}(\omega) = -\kappa_\text{di} \omega
    \end{split}
\end{equation}
assigning the closed-loop pcH form
\begin{align*}
    \begin{pmatrix}
    \dot \varepsilon_0 \\
    \dot \varepsilon_r \\
    M \dot \omega
    \end{pmatrix} = \begin{pmatrix}
    \frac{1}{2}S(\omega) & \frac{1}{2}\omega & \frac{1}{2}M^{-1}  \\
    -\frac{1}{2}\omega^\top & 0 & \mathbf{0} \\
   -\frac{1}{2}M^{-1} & \mathbf{0} & S(\omega) - \kappa_\text{di} M^{-1}
    \end{pmatrix}\begin{pmatrix}
     \varepsilon_0 \\
     \varepsilon_r -1\\
    M  \omega
    \end{pmatrix}
\end{align*}
with 
\begin{subequations}
\begin{align}
    & H_d(\varepsilon, \omega) = \frac{1}{2} \Big(
    \varepsilon_0^\top \varepsilon_0 + (\varepsilon_r-1)^2 + \omega^\top M \omega\Big)\label{eq:Hd}\\
    & J_d(\omega) = \begin{pmatrix}
    \frac{1}{2}S(\omega) & \frac{1}{2}\omega & \frac{1}{2}M^{-1}  \\
    -\frac{1}{2}\omega^\top & 0 & \mathbf{0} \\
   -\frac{1}{2}M^{-1} & \mathbf{0} & S(\omega)
    \end{pmatrix} \label{eq:Jd}
    \\ &
    R_d = \begin{pmatrix}
    \mathbf{0} & \mathbf{0} & \mathbf{0}  \\
    \mathbf{0} & 0 & \mathbf{0} \\
   \mathbf{0} & \mathbf{0} & \kappa_\text{di} M^{-1}
    \end{pmatrix}\succeq 0.\label{eq:Rd}
\end{align}
\end{subequations}
\end{proposition}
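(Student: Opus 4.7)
The goal is to verify the IDA-PBC matching equation
\begin{align*}
J(\omega)\nabla H(\varepsilon,\omega) + \begin{pmatrix}\mathbf{0}\\ 0 \\ I\end{pmatrix} u_\text{ida}(\varepsilon,\omega) = (J_d(\omega) - R_d)\,\nabla H_d(\varepsilon,\omega)
\end{align*}
by direct substitution of $H_d$, $J_d$, $R_d$ in \eqref{eq:Hd}--\eqref{eq:Rd} and $u_\text{ida}$ in \eqref{eq:uida_ct}. Since $H_d$ is quadratic, $J_d$ and $R_d$ have a transparent block structure, and the only nonlinearity is carried by $S(\omega)$, the verification reduces to elementary block-matrix arithmetic.

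First, I would record $\nabla H_d(\varepsilon,\omega) = \text{col}\{\varepsilon_0,\,\varepsilon_r-1,\,M\omega\}$, noting the constant shift in the second component, which places the target $(\varepsilon^\star,\omega^\star)$ at the minimum of $H_d$. Skew-symmetry of $J_d$ is then immediate from $S(\omega)=-S^\top(\omega)$ together with the antisymmetric placement of the off-diagonal blocks $\tfrac{1}{2}\omega$ and $\tfrac{1}{2}M^{-1}$. The matrix $R_d$ is block-diagonal and nonzero only in the $\omega$-channel with entry $\kappa_\text{di}M^{-1}\succeq 0$, hence $R_d\succeq 0$.

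The core step is to expand $(J_d - R_d)\nabla H_d$ block by block and to match it to the open-loop right-hand side augmented by $u_\text{ida}$. The first block yields $\tfrac{1}{2}S(\omega)\varepsilon_0 + \tfrac{1}{2}\omega(\varepsilon_r-1) + \tfrac{1}{2}M^{-1}(M\omega)$; the crucial cancellation $\tfrac{1}{2}\omega(\varepsilon_r-1)+\tfrac{1}{2}\omega = \tfrac{1}{2}\omega\varepsilon_r$ reproduces $\dot\varepsilon_0$ of \eqref{eq:quat_d_err_a} exactly. The second block returns $-\tfrac{1}{2}\omega^\top\varepsilon_0$ as in the open loop. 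The third block gives $-\tfrac{1}{2}M^{-1}\varepsilon_0 + S(\omega)M\omega - \kappa_\text{di}\omega$, which matches $S(\omega)M\omega + u_\text{ida}$ and identifies $u_\text{es} = -\tfrac{1}{2}M^{-1}\varepsilon_0$ and $u_\text{di} = -\kappa_\text{di}\omega$, so that the gain $\kappa_0 = \tfrac{1}{2}M^{-1}$ is exactly what \eqref{eq:lin_fed_ct} requires.

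The step to watch is the interaction between the new off-diagonal coupling $\tfrac{1}{2}M^{-1}$ injected in $J_d$ and the constant shift $\varepsilon_r \mapsto \varepsilon_r-1$ carried by $\nabla H_d$: these two contributions must combine to restore the original $\tfrac{1}{2}\omega\varepsilon_r$ term in the $\varepsilon_0$-equation while simultaneously producing the energy-shaping feedback $u_\text{es}$ in the momentum equation. This compatibility condition is what forces the choice $\kappa_0 = \tfrac{1}{2}M^{-1}$; no other linear gain would simultaneously preserve the kinematic subsystem and assign $H_d$ as a bona fide Lyapunov function, which in turn justifies calling \eqref{eq:lin_fed_ct} an IDA-PBC controller with the stated $(H_d, J_d, R_d)$.
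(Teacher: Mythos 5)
Your direct block-by-block verification is correct and is precisely the computation the paper leaves implicit (the proposition is stated without proof): the cancellation $\tfrac{1}{2}\omega(\varepsilon_r-1)+\tfrac{1}{2}M^{-1}(M\omega)=\tfrac{1}{2}\omega\varepsilon_r$ in the kinematic row and the reading-off of $u_\text{es}=-\tfrac{1}{2}M^{-1}\varepsilon_0$, $u_\text{di}=-\kappa_\text{di}\omega$ from the momentum row are exactly right, as is the skew-symmetry check for $J_d$. The only point to handle with care --- inherited from the statement itself rather than introduced by you --- is that $\kappa_\text{di}M^{-1}\succeq 0$ is not automatic for arbitrary $\kappa_\text{di}\succ 0$ and $M\succ 0$ (the product of two positive definite matrices need not even be symmetric), so the claim $R_d\succeq 0$ implicitly assumes the symmetric part of $\kappa_\text{di}M^{-1}$ is positive semidefinite, e.g.\ that $\kappa_\text{di}$ and $M$ commute.
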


Before going farther we note that, as deeply commented in \cite{bhat2000topological, mazenc2016quaternion}, regulation to the desired orientation is, in general, achieved when $\varepsilon(t)\to \pm \varepsilon^\star$. However, in the result above (and throughout the whole paper), we fix the equilibrium to stabilize as the positive one while neglecting possible unwinding phenomena whose study is left as a perspective. With this in mind and because $-\varepsilon^\star$ is a non-isolated equilibrium of the closed-loop dynamics, all upcoming results are intended to hold locally, unless explicitly specified. 

\begin{remark} \label{rmk:new_un}
To cope with unwinding, the energy-shaping controller in \eqref{eq:uida_ct} can be modified to assign the new energy
\begin{align*}
    & \hat H_d(\varepsilon, \omega) = \frac{1}{2} \Big(
    \varepsilon_0^\top \varepsilon_0 +1-\varepsilon_r^2 + \omega^\top M \omega\Big)
\end{align*}
with local minima at $\pm \varepsilon^\star$. This is achieved replacing in \eqref{eq:uida_ct} 
\begin{equation*}
\begin{split}
    & \hat u_\text{es}(\varepsilon) = - \varepsilon_r M^{-1}\varepsilon_0
    \end{split}
\end{equation*}
assigning the closed-loop pcH form
\begin{align*}
    \begin{pmatrix}
    \dot \varepsilon_0 \\
    \dot \varepsilon_r \\
    M \dot \omega
    \end{pmatrix}\! \! \! = \! \! \!\begin{pmatrix}
    \frac{1}{2}S(\omega) & 0 & \frac{1}{2}\varepsilon_r M^{-1}  \\
    0 & 0 & -\frac{1}{2}\varepsilon_{0}^\top M^{-1} \\
  - \frac{1}{2}\varepsilon_r M^{-1} & \frac{1}{2} M^{-1}\varepsilon_0 & S(\omega) - \kappa_\text{di} M^{-1}
    \end{pmatrix}\! \! \! \begin{pmatrix}
     \varepsilon_0 \\
     -\varepsilon_r\\
    M  \omega
    \end{pmatrix}.
\end{align*}
However, such a feedback guarantees asymptotic stabilization of the desired attitude (associated to equilibria $\pm \varepsilon^\star$) provided that $\varepsilon_r(0)\neq 0$ or $\omega(0)\neq 0$. How to overcome such a limitation is currently under investigation.
\end{remark}

In the following, starting from Proposition \ref{prop:ct_ida}, we design a piecewise constant control law driving the body to the desired orientation associated to  $q^\star = \text{col}\{ q_{0}^{\star}, q_{r}^{\star}\}$. 

\section{Sampled-data model and problem statement} \label{sec:pb_sd}

Consider the system \eqref{eq:quat_d_err} and let measurements (of the state) be available at the sampling instants $t = k\delta$, with $k\geq 0$ and $\delta>0$ the sampling period, and the input be piecewise constant over time intervals of length $\delta$; i.e. we set
$
    u(t) = u_k$, $t\in [k\delta, (k+1)\delta[.
$
Then, we seek for a sampled-data feedback $u_k = \gamma(\varepsilon_k, \omega_k)$ making \eqref{eq:e_star} asymptotically stable for the sampled-data equivalent model of \eqref{eq:quat_d_err} given by 
\begin{align}\label{eq:sd_model}
    \zeta^{+}(u)
     =& 
     \zeta + \delta F^\delta(\zeta, u)
\end{align}
with $\zeta = \text{col}\{\varepsilon, \omega\}$, $\zeta = \zeta_k$, $u = u_k$, $\zeta^+(u) = \zeta_{k+1}$ and 
\begin{align*}
 f(\zeta) =& f(\varepsilon, \omega) = J(\omega)\nabla H(\varepsilon,\omega), \quad B = \begin{pmatrix}
    \mathbf{0}^\top &
    I
\end{pmatrix}^\top 
\\ 
   F^\delta(\zeta, u) =&  F^\delta(\varepsilon, \omega, u) =  J(\omega)\nabla H(\varepsilon,\omega)  + Bu\\ &+ \sum_{i>0}\frac{\delta^i}{(i+1)!} \mathrm{L}_{f  + Bu}^{i} \Big(J(\omega)\nabla H(\varepsilon,\omega) +Bu\Big).
\end{align*}
Setting $F^\delta_0(\zeta) = F^\delta(\zeta,\mathbf{0})$, we denote by
\begin{align*}
\zeta^+
     :=&  
     \zeta^{+}(\mathbf{0}) =  \zeta + \delta F^\delta_0(\zeta)\\
    g^\delta (\varepsilon, \omega, u) u : =& F^\delta(\zeta, u) - F^\delta_0(\zeta)
\end{align*}
the drift and controlled components associated to \eqref{eq:sd_model}.
In this setting we wish to accomplish stabilization of $\zeta^\star = \text{col}\{ \varepsilon^\star, \mathbf{0}
\}$ as in \eqref{eq:e_star} by solving a discrete-time IDA-PBC problem over the sampled-data equivalent model \eqref{eq:sd_model} in two steps, as formalized in \cite{moreschini2020stabilization} and recalled below. 
\begin{problem}[sampled-data energy-shaping]
\label{pb:es_sd}
Design the energy-shaping control $u^\delta_\text{es}: \mathbb{R}^{4}\times \mathbb{R}^3 \to \mathbb{R}^3$ 
so that, setting
\begin{align}\label{eq:u_partial}
    u = u^\delta_\text{es}(\zeta) + v
\end{align} 
the closed-loop dynamics exhibits a conservative discrete-time pcH structure \cite{moreschini2019discrete, 9749871}
\begin{align}
    \label{eq:dtpcH_cl}
\zeta^{+}(u^\delta_\text{es}(\zeta)+v)\!\! =\!\! \zeta\!\!+\!\! \delta J^\delta_d(\zeta )\bar \nabla H_d|_{\zeta}^{\zeta^{+}(u^\delta_\text{es}(\zeta))}\!\! +\!\! \delta g^\delta_d(\zeta, v) v
\end{align}
with the new Hamiltonian \eqref{eq:Hd}, discrete gradient
\begin{align*}
    \bar \nabla H_d|_{\zeta}^{\zeta^{+}} = \frac{1}{2}P (\zeta^+ + \zeta - 2 \zeta^\star), \quad P = \begin{pmatrix}
    I & \mathbf{0} 
    \\
    \mathbf{0} & M
    \end{pmatrix}\succ 0
\end{align*}
a suitable $J^\delta_d(\zeta) = -\big( J^\delta_d(\zeta) \big)^\top \in \mathbb{R}^{7\times 7}$ and 
 \begin{align*}
     g^\delta_d(\zeta, v)v=& g^{\delta}(\zeta, u^{\delta}_\text{es}(\zeta)+v)(u^{\delta}_\text{es}(\zeta)+v)\\ &-
     g^{\delta}(\zeta, u^{\delta}_\text{es}(\zeta)) u^{\delta}_\text{es}(\zeta). \hfill \blacksquare
 \end{align*}
 \end{problem}
 
 \smallskip
The solution to Problem \ref{pb:es_sd} guarantees that the closed-loop dynamics \eqref{eq:dtpcH_cl} possesses a stable equilibrium at \eqref{eq:e_star}. In addition, it is proved that \eqref{eq:u_partial} makes \eqref{eq:dtpcH_cl} lossless \cite{monaco2011nonlinear} with respect to the conjugate output
\begin{align}
    \label{eq:pass_con_sd}
    Y^\delta_d(\zeta,v) = \big(g_d^\delta(\zeta,v)\big)^\top \bar \nabla H|_{\zeta^{+}(u^\delta_\text{es}(\zeta))}^{\zeta^{+}(u^\delta_\text{es}(\zeta)+v)} 
\end{align}  
that is, it verifies the dissipation equality 
\begin{align}\label{eq:EBE}
    \Delta H_d(\zeta) \!\!= \!\! H_d(\zeta^+(u^\delta_\text{es}(\zeta)+v)) \!\!-\!\! H_d(\zeta) \!\!=\!\!\delta  v^\top Y^\delta_d(\zeta,v).
\end{align}
With this in mind, attitude stabilization via IDA-PBC is achieved under damping if the problem below is solved.
\begin{problem}[sampled-data damping injection]
\label{pb:di_sd} Seek, if any, for a damping-injection feedback $v = u^\delta_\text{di}(\zeta)$ 
with $u^\delta_\text{di}: \mathbb{R}^4 \times \mathbb{R}^3\to \mathbb{R}^3$ solution to the damping equality
\begin{align}\label{eq:u_di}
    u^\delta_\text{di}(\zeta) + \kappa_\text{di} Y^\delta_d(\zeta,u_\text{di}^\delta(\zeta)) = \mathbf{0},\quad  \kappa_\text{di}\succ 0
\end{align}
so making $\zeta^\star$ asymptotically stable for  \eqref{eq:dtpcH_cl}. $\hfill \blacksquare$
\end{problem}

\section{Digital attitude stabilization via IDA-PBC}\label{sec:main_sect}
Now proceed with the design in two steps by proving, in a constructive manner, the existence of both the energy-shaping and damping-injection components of the control
\begin{align}\label{eq:uida_sd}
    u = u^\delta_\text{ida}(\zeta) = u^\delta_\text{es}(\zeta) + u^\delta_\text{di}(\zeta)
\end{align}
aimed at, respectively, assigning and asymptotically stabilizing the desired equilibrium with a desired energy via PBC.
\subsection{Main result}
As detailed in Problem \ref{pb:es_sd} the energy-shaping component is responsible for assigning the equilibrium \eqref{eq:e_star} with a discrete-time pcH form \eqref{eq:dtpcH_cl}. As proved in \cite{moreschini2019discrete} in a purely discrete-time context, this corresponds to solving the so-called Discrete-time Matching Equation (DME) below
\begin{align}
    \label{eq:DME}
       J^\delta_d(\zeta)P\Big( \zeta -\zeta^\star  + \frac{\delta}{2} F^\delta(\zeta, u^\delta_\text{es}(\zeta)) \Big) = F^\delta(\zeta, u^\delta_\text{es}(\zeta))
\end{align}
for a suitably defined skew-symmetric interconnection matrix $J^\delta_d(\zeta)\in \mathbb{R}^{7\times 7}$. The next result proves that a solution to \eqref{eq:DME} exists in the form of series expansions in powers of $\delta$ around the continuous-time counterparts in Proposition \ref{prop:ct_ida}. 

\smallskip
\begin{proposition}\label{prop:sd_es}
Consider the error dynamics \eqref{eq:quat_d_err} with sampled-data equivalent model \eqref{eq:sd_model} and $\zeta^\star =\text{col}\{ \varepsilon^\star, \mathbf{0}\}$ as in \eqref{eq:e_star} the equilibrium to stabilize. Then, there exists $T^\star >0$ such that for all $\delta \in [0, T^\star[$ the DME \eqref{eq:DME} admits unique solutions $<J^\delta_d(\zeta), \ u_\text{es}^\delta(\zeta)>$ in the form of series expansions around \eqref{eq:uida_ct}-\eqref{eq:Jd}; namely, one gets
\begin{subequations}
\label{eq:JdUes}
\begin{align}
    J_d^\delta(\zeta) =& J_d(\omega) + \sum_{i>0}\frac{\delta^i}{(i+1)!}J_d^i(\zeta)
    \label{eq:JdUes_a}
    \\
    u_\text{es}^\delta(\zeta) =& u_\text{es}(\varepsilon) + \sum_{i>0}\frac{\delta^i}{(i+1)!}u_\text{es}^i(\zeta) \label{eq:JdUes_b}
\end{align}
\end{subequations}
with the superscripts $^i$ denoting the order of the term in the series expansion.
\end{proposition}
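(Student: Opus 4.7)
The plan is to substitute the ansatz \eqref{eq:JdUes} into the DME \eqref{eq:DME}, expand both sides as formal power series in $\delta$ via the Lie-series representation of $F^\delta$, and match coefficients order by order. At each order one obtains an affine equation in the new unknowns $J_d^i(\zeta)$ (skew-symmetric) and $u_\text{es}^i(\zeta)$, with a forcing term determined by the lower-order coefficients already constructed; solvability at each step then yields the series termwise, and convergence on a small interval $[0, T^\star[$ follows from standard analyticity estimates on the Lie series.

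At the zeroth order, setting $\delta = 0$ in \eqref{eq:DME} reduces to the continuous-time matching identity
\begin{equation*}
J_d(\omega) P(\zeta - \zeta^\star) = f(\zeta) + B u_\text{es}(\varepsilon),
\end{equation*}
which, with $P = \diag(I, M)$ and $\zeta - \zeta^\star = \text{col}\{\varepsilon_0, \varepsilon_r - 1, \omega\}$, is exactly the conservative part of the closed-loop pcH assigned by Proposition \ref{prop:ct_ida}, and is therefore satisfied by $J_d(\omega)$ in \eqref{eq:Jd} and $u_\text{es}(\varepsilon) = -\frac{1}{2} M^{-1}\varepsilon_0$. This grounds the recursion.

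For $i \geq 1$, isolating the $\delta^i$ coefficient of \eqref{eq:DME} yields, after some algebra, a linear equation of the form
\begin{equation*}
J_d^i(\zeta) P(\zeta - \zeta^\star) - B u_\text{es}^i(\zeta) = \mathcal{R}_i(\zeta),
\end{equation*}
where $\mathcal{R}_i$ is a polynomial expression in $J_d^0, \ldots, J_d^{i-1}$ and $u_\text{es}^0, \ldots, u_\text{es}^{i-1}$ together with their iterated Lie derivatives along $f + B u_\text{es}^0$. The pair $(J_d^i, u_\text{es}^i)$ can then be constructed in two moves: first, the three components dual to $\omega$ are absorbed into $u_\text{es}^i$ via the full-rank block $B^\top B = I$; second, the remaining four components dual to $\varepsilon$ are matched by picking $J_d^i$ with the same sparsity template as $J_d$ in \eqref{eq:Jd}. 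Enforcing this template removes the gauge freedom inherent to skew-symmetric interconnections and pins down uniqueness across orders.

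The hardest step will be verifying the intrinsic consistency demanded by the skew-symmetry of $J_d^i$, namely that the scalar contraction $(\zeta-\zeta^\star)^\top P^\top \mathcal{R}_i(\zeta)$ be compatible with the algebraic constraint $(\zeta-\zeta^\star)^\top P^\top J_d^i P(\zeta-\zeta^\star) = 0$; this follows from the Hamiltonian character of the Lie series, which propagates the antisymmetric structure through the recursion. Convergence on $\delta \in [0, T^\star[$ is then obtained by majorizing the iterated Lie derivatives on a neighborhood $\mathcal{B}$ of $\zeta^\star$ and invoking the analyticity of $F^\delta(\zeta, \cdot)$ in $\delta$, completing the proof.
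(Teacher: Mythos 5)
Your termwise recursion is sound in outline and actually mirrors what the paper does in its \emph{Computational aspects} subsection (the linear equalities \eqref{eq:elli} resolved by projecting with the annihilator $B^\perp$), but it is not the route taken for the proof proper. There, the DME \eqref{eq:DME} is recast as a single formal equality $\mathcal{Q}(\delta,\zeta,j,u)=\mathbf{0}$ in the vectorized unknown $j=\vect(J_d^\delta(\zeta))$ and $u=u_\text{es}^\delta(\zeta)$; the zeroth-order equation is checked to be solved by the continuous-time pair from Proposition \ref{prop:ct_ida} (your grounding step, which is correct), and the Implicit Function Theorem is then invoked once after verifying that $\lim_{\delta\to0}\tfrac{\partial\mathcal{Q}}{\partial(u,j)}=\begin{pmatrix}B & I\otimes\nabla^\top H_d(\zeta)\end{pmatrix}$ has full row rank for $\zeta\neq\zeta^\star$. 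That single rank condition is what delivers existence of all coefficients of \eqref{eq:JdUes} at once, whereas your scheme must re-establish solvability at every order.

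That re-establishment is exactly where your argument has a genuine gap, and you flag it yourself. At order $i$ you must solve $J_d^iP(\zeta-\zeta^\star)-Bu_\text{es}^i=\mathcal{R}_i$ with $J_d^i$ skew-symmetric. Contracting with $(\zeta-\zeta^\star)^\top P$ annihilates the $J_d^i$ term and leaves $-\omega^\top M\,u_\text{es}^i=(\zeta-\zeta^\star)^\top P\,\mathcal{R}_i(\zeta)$; on the set $\{\omega=\mathbf{0},\ \varepsilon\neq\varepsilon^\star\}$ the left-hand side vanishes identically, so solvability hinges on the compatibility condition $(\zeta-\zeta^\star)^\top P\,\mathcal{R}_i(\zeta)=0$, which you dispatch by appealing to ``the Hamiltonian character of the Lie series'' without proof. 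This is the actual content of the matching problem: one must show that $u_\text{es}^i$ can be chosen so that the order-$i$ piece of the one-step energy balance $H_d(\zeta^+(u_\text{es}^\delta(\zeta)))=H_d(\zeta)$ holds, after which a skew-symmetric $J_d^i$ mapping $P(\zeta-\zeta^\star)$ onto the required (orthogonal) vector always exists. The paper's full-rank Jacobian condition is precisely the substitute for this verification. Separately, your device of pinning down uniqueness by forcing $J_d^i$ to inherit the sparsity of \eqref{eq:Jd} is not in the paper, which parametrizes a general skew block matrix \eqref{eq:Jdi}; you do not show the template-restricted system stays solvable at every order (it happens to be consistent with the computed $J_{23}^2=\mathbf{0}$, but that is an observation, not an argument). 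Your convergence remark is at the level of rigor of the paper itself, which treats \eqref{eq:JdUes} as asymptotic expansions with a state-dependent $T^\star$ (cf.\ Remark \ref{rmk:Tstar}), so I would not press on that point.
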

\begin{proof}
Before going through the details of the proof we highlight that, because $J_d^\delta(\zeta)$ must be skew-symmetric, then out of $49$ elements, only $28$ must be identified; accordingly, the equality \eqref{eq:DME} is in $28+3$ unknowns. With this in mind, the proof follows from the Implicit Function Theorem rewriting \eqref{eq:DME} as a formal series equality in powers of $\delta$ in the corresponding $31$ unknowns; namely, setting for simplicity $u = u_\text{es}^\delta(\zeta)$,  $\mathcal{J} = J_d^\delta(\zeta)$ and $j = \mathrm{vec}(\mathcal{J}) \in \mathbb{R}^{49}$, one looks for the solutions to the equality $\mathcal{Q}(\delta, \zeta, j, u) = \mathbf{0}$
 where
\begin{align*}
\mathcal{Q}(\delta, \zeta, j, u) := 
 F^\delta(\zeta,u) -  (I\otimes \bar \nabla^\top H_d|_\zeta^{\zeta^+(u)} )j 
 \end{align*}
admits the expansion 
\begin{align*}
\mathcal{Q}(\delta, \zeta, j, u) = \mathcal{Q}^0(\zeta,j, u) + \sum_{i >0} \frac{\delta^i}{(i+1)!}\mathcal{Q}^i(\zeta,j, u).
\end{align*}
for suitably defined terms $\mathcal{Q}^i(\zeta,j, u) $ of order $i\geq 0$.
Because
 \begin{align*}
\mathcal{Q}^0(\zeta,j, u) =& 
f(\zeta)+B u - \mathcal{J} \nabla H_d(\zeta)
\\ =&
f(\zeta)+B u  - (I \otimes \nabla^\top H_d(\zeta)) j , 
\end{align*}
one gets that the corresponding equality $\mathcal{Q}^0(\zeta,j, u) = \mathbf{0}$ is solved when $j = \text{vec}(J_d(\omega))$ (i.e., $\mathcal{J} = J_d(\omega)$) and $u = u_\text{es}(\varepsilon)$ as in \eqref{eq:uida_ct}-\eqref{eq:Jd}.
By the Implicit Function Theorem then, the formal equality $\mathcal{Q}(\delta, \zeta, j, u) = \mathbf{0}$ admits a solution of the form \eqref{eq:JdUes} because the matrix
\begin{align*}
\lim_{\delta \to 0} \begin{pmatrix}
\frac{\partial }{\partial u} & 
\frac{\partial }{\partial j} 
\end{pmatrix}\mathcal{Q}(\delta, \zeta, j, u) =
\begin{pmatrix}
B &  I \otimes \nabla^\top H_d(\zeta)
\end{pmatrix} \\ =
\begin{pmatrix}
\mathbf{0} & I_4 \otimes (\zeta -\zeta^\star)^\top P & \mathbf{0}  \\
I_3 & \mathbf{0} & I_3 \otimes (\zeta -\zeta^\star)^\top P
\end{pmatrix}
\end{align*}
is full rank at  $\zeta \neq \zeta^\star$ so getting the result. 
\end{proof}

By the result above, the feedback law \eqref{eq:u_partial} assigns the conservative discrete pcH structure \eqref{eq:dtpcH_cl} with interconnection matrix of the form \eqref{eq:JdUes_a} and the same Hamiltonian as in continuous time. As a consequence \cite{moreschini2019discrete}, the controlled system is passive (and lossless) with an equilibrium at the desired $\zeta^\star = \text\{\varepsilon^\star, \mathbf{0} \}$. Accordingly, if one is able to compute a solution to the damping equality \eqref{eq:u_di} asymptotic stabilization of the desired equilibrium is achieved. 

\begin{theorem}
\label{th:fin_sd}
Consider the dynamics \eqref{eq:quat_d_err} under the hypotheses of Proposition \ref{prop:sd_es} with \eqref{eq:u_partial} assigning the pcH form \eqref{eq:dtpcH_cl}. Then, the following holds:
\begin{description}
\item[$(i)$] the dynamics \eqref{eq:dtpcH_cl} is lossless with respect to the output \eqref{eq:pass_con_sd} and storage function \eqref{eq:Hd}, that is, the energy-balance equality \eqref{eq:EBE} holds;
\item[$(ii)$] there exists $T^\star >0 $ such that for all $\delta \in [0, T^\star[$ the damping equality \eqref{eq:u_di} admits a unique solution in the form of a series expansion around \eqref{eq:uida_ct}, that is
\begin{align}
    \label{eq:u_di_ser}
    u_\text{di}^\delta(\zeta) = u_\text{di}(\omega) + \sum_{i>0}\frac{\delta^i}{(i+1)!}u_\text{di}^i(\zeta)
\end{align}
with $^i$ denoting the order of the term in the series expansion; 
\item[$(iii)$] the IDA-PBC feedback \eqref{eq:uida_sd} asymptotically stabilizes \eqref{eq:dtpcH_cl} at the desired equilibrium \eqref{eq:e_star}.
\end{description}
\end{theorem}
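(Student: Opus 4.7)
My plan is to address the three parts in order, leveraging the machinery already developed in Propositions~\ref{prop:ct_ida} and \ref{prop:sd_es}.

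For part $(i)$, I will invoke the defining property of the discrete gradient, namely $V(w)-V(v) = (w-v)^\top \bar\nabla V|_v^w$, applied to $H_d$ with $v = \zeta$ and $w = \zeta^{+}(u^\delta_\text{es}(\zeta)+v)$. Substituting the closed-loop increment \eqref{eq:dtpcH_cl} into this identity produces two terms: the drift contribution proportional to $\bar\nabla^\top H_d|_\zeta^{\zeta^+(u^\delta_\text{es}(\zeta))}\, J_d^\delta(\zeta)\, \bar\nabla H_d|_\zeta^{\zeta^+(u^\delta_\text{es}(\zeta))}$, which vanishes by skew-symmetry of $J_d^\delta(\zeta)$, and a controlled contribution equal to $\delta\, v^\top g_d^\delta(\zeta,v)^\top \bar\nabla H_d|_{\zeta^+(u^\delta_\text{es}(\zeta))}^{\zeta^+(u^\delta_\text{es}(\zeta)+v)}$, which is precisely $\delta v^\top Y_d^\delta(\zeta,v)$ by \eqref{eq:pass_con_sd}.

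For part $(ii)$, I will apply the Implicit Function Theorem to the damping map $\Phi(\delta,\zeta,v) := v + \kappa_\text{di} Y_d^\delta(\zeta,v)$. At $\delta = 0$ one has $g^0(\zeta,u) = B$, so $g_d^0(\zeta,v) \equiv B$, and the discrete gradient collapses to $\nabla H_d(\zeta)$; consequently $Y_d^0(\zeta,v)$ does not depend on $v$, and $\partial_v \Phi|_{\delta=0} = I$ is trivially invertible. The IFT thus yields a unique solution $v = u_\text{di}^\delta(\zeta)$ on some interval $[0,T^\star[$, and the series form \eqref{eq:u_di_ser} is obtained by matching coefficients of $\delta^i$ order by order, with the leading term coinciding with the continuous-time damping feedback of Proposition~\ref{prop:ct_ida}.

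For part $(iii)$, since $H_d$ in \eqref{eq:Hd} is positive definite (modulo the unit-quaternion constraint) on a neighborhood of $\zeta^\star$, it serves as a Lyapunov candidate. Substituting $v = u_\text{di}^\delta(\zeta)$ in the energy-balance equality \eqref{eq:EBE} and using the damping condition \eqref{eq:u_di} in the form $Y_d^\delta(\zeta,u_\text{di}^\delta(\zeta)) = -\kappa_\text{di}^{-1} u_\text{di}^\delta(\zeta)$ yields
\[
\Delta H_d(\zeta) \;=\; -\delta\, (u_\text{di}^\delta(\zeta))^\top \kappa_\text{di}^{-1}\, u_\text{di}^\delta(\zeta) \;\leq\; 0,
\]
with equality only when $u_\text{di}^\delta(\zeta) = 0$, equivalently $Y_d^\delta(\zeta,0) = 0$. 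Choosing a compact sublevel set of $H_d$ inside the neighborhood of $\zeta^\star$ ensures forward invariance, and a discrete-time LaSalle--Krasovskii argument confines trajectories to the largest positively invariant set inside $\{Y_d^\delta(\zeta,0) = 0\}$. Since at leading order $Y_d^\delta(\zeta,0) = M\omega + O(\delta)$, this set locally forces $\omega_k \equiv 0$; propagating this constraint through \eqref{eq:quat_d_err_a}--\eqref{eq:quat_d_err_c} then pins $\varepsilon_0 = 0$ and hence $\varepsilon = \varepsilon^\star$.

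I expect the hard step to be the last one: ruling out, uniformly for all sufficiently small $\delta$, the existence of non-equilibrium trajectories along which $Y_d^\delta(\zeta,0)$ vanishes identically. The $O(\delta)$ perturbations in both $Y_d^\delta$ and $u_\text{es}^\delta$ couple the kinematic and dynamic variables in a manner not entirely captured by the continuous-time detectability argument, so the zero-dynamics reduction must be carried out in an order-by-order fashion consistent with the series expansions \eqref{eq:JdUes} and \eqref{eq:u_di_ser}, together with a continuity argument ensuring that $T^\star$ can be taken small enough to preserve the continuous-time detectability picture.
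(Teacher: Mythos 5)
Your proposal follows essentially the same route as the paper's proof: part $(i)$ via the discrete-gradient identity plus skew-symmetry of $J_d^\delta(\zeta)$, part $(ii)$ via the Implicit Function Theorem with the Jacobian of the damping map reducing to the identity at $\delta=0$, and part $(iii)$ via the energy-balance inequality and a LaSalle invariance argument on the set where the passive output vanishes. The detectability step you flag as the hard one is exactly the point the paper dispatches in a single line by appealing to Proposition~\ref{prop:ct_ida} and average passivity, so your more cautious, order-by-order reading of that step is a fair account of the same argument rather than a departure from it.
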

\begin{proof}
$(i)$ follows by computing the one step increment of the Hamiltonian \eqref{eq:Hd} along \eqref{eq:dtpcH_cl}; namely under Proposition \ref{prop:sd_es} one gets 
\begin{align*}
   \Delta H_d   =& H_d(\zeta^+(u^\delta_\text{es}(\zeta)+v)) - H_d(\zeta^+(u^\delta_\text{es}(\zeta)))  \\ &
   +H_d(\zeta^+(u^\delta_\text{es}(\zeta))) - H_d(\zeta) \\ = &  H_d(\zeta^+(u^\delta_\text{es}(\zeta)+v)) - H_d(\zeta^+(u^\delta_\text{es}(\zeta))) 
\end{align*}
because $J_d^\delta(\zeta) = -\big( J_d^\delta(\zeta)\big)^\top$ and
\begin{align*}
   &  H_d(\zeta^+(u^\delta_\text{es}(\zeta))) - H_d(\zeta)  \\ & = \bar \nabla^\top H_d|_{\zeta}^{\zeta^+(u^\delta_\text{es}(\zeta))}J_d^\delta(\zeta) \bar \nabla  H_d|_{\zeta}^{\zeta^+(u^\delta_\text{es}(\zeta))}  = 0
\end{align*}
so that
\begin{align*}
    \Delta H_d  =\delta  \bar \nabla^\top H_d|_{\zeta^+(u^\delta_\text{es}(\zeta))}^{\zeta^+(u^\delta_\text{es}(\zeta)+v)} g^{\delta}_d(\zeta,v)v  
\end{align*}
and thus \eqref{eq:EBE}. 
$(ii)$ follows from the implicit function theorem along the lines of the proof of Proposition \ref{prop:sd_es}. As far as $(iii)$ is concerned, because \eqref{eq:u_di_ser} solves \eqref{eq:u_di}, it guarantees
\begin{align*}
    \Delta H_d  = -\delta \kappa_\text{di} \|Y^\delta(\zeta, u_\text{di}^\delta (\zeta))\|^2\leq 0, \quad \kappa_\text{di}\succ 0
\end{align*}
with trajectories of the closed-loop system converging to the largest invariant set contained in $\{\zeta \in \mathbb{R}^7 \text{ s.t. } Y^\delta(\zeta, u_\text{di}^\delta (\zeta)) = 0 \}$. By Proposition \ref{prop:ct_ida} and average passivity \cite{monaco2011nonlinear} such set only contains  $\zeta^\star$ and  $(iii)$ follows. 
\end{proof}

\begin{remark} 
As $\delta \to 0$, the overall attitude control law \eqref{eq:uida_sd} and the sampled-data interconnection matrix \eqref{eq:JdUes} recover the continuous-time counterparts in \eqref{eq:uida_ct} and \eqref{eq:Jd}.
\end{remark}
It turns out that the overall feedback law \eqref{eq:uida_sd} gets the form of an asymptotic series expansion in powers of $\delta$ through the energy-shaping and damping components \eqref{eq:JdUes_b} and \eqref{eq:u_di_ser}. Despite computing closed form solutions is a tough task, approximations can be naturally defined as detailed below.

\begin{remark}\label{rmk:Tstar}
In both Theorem \ref{th:fin_sd} and Proposition \ref{prop:sd_es} 
the existence of $T^\star$ depends on the state at the current time; namely, for each measured state at time $t = k\delta$ (i.e., $\zeta_k$), there exists $T^\star$ so that the corresponding equalities admit a solution. Current work is toward the definition of additional uniformity arguments for the qualitative definition of a single $T^\star$ for all $\zeta$ in a neighborhood of $\zeta^\star$.
\end{remark}

\subsection{Computational aspects}
The result in Proposition \ref{prop:sd_es} highlights that a solution to the matching equality \eqref{eq:DME} (and thus to Problem \ref{pb:es_sd}) exists in the form of a series expansion in powers of $\delta$. Despite closed forms are hard to be computed in practice, all terms $<J^i_d(\zeta), u_\text{es}^i(\zeta)>$ in \eqref{eq:JdUes} are explicitly computable as the solution to a set of linear equalities of the form
\begin{align}\label{eq:elli}
    B u^i_\text{es}(\zeta) - J^i_d(\zeta)\big(\zeta - \zeta^\star) = \ell^i(\zeta), \quad i = 0,1,\dots
\end{align}
with $\ell^i(\zeta) = \tilde \ell^i(\zeta, u_\text{es}(\zeta), J_d(\zeta), \dots, u_\text{es}^{i-1}(\zeta), J_d^{i-1}(\zeta))$ deduced by substituting \eqref{eq:JdUes} into \eqref{eq:DME} and equating the terms with the same powers of $\delta$. Setting blockwise
\begin{align}
    J_d^{i}(\zeta) = -\big(J_d^{i}(\zeta)\big)^\top
= \begin{pmatrix}
J_{11}^i(\zeta) & J_{12}^i(\zeta) & J_{13}^i(\zeta)\\
\star & 0 & J_{23}^i(\zeta)\\
\star & \star & J_{33}^i(\zeta) 
\end{pmatrix}
\label{eq:Jdi}
\end{align}
the solution can be iteratively computed by multiplying both sides of  \eqref{eq:elli} by $B^\perp = \begin{pmatrix} I_4 & \mathbf{0} \end{pmatrix}$ so getting
\begin{align*}
   & B^\perp J^i_d(\zeta)(\zeta - \zeta^\star) = B^\perp \ell^i(\zeta)\\
   & u^i_\text{es}(\zeta) =  B^\top \big(\ell^i (\zeta)+ J^i_d(\zeta)(\zeta - \zeta^\star)\big)
\end{align*}
 and, in particular, the linear equality in $J^i_d(\zeta)$ given by
\begin{align*}
   \begin{pmatrix}
J_{11}^i(\zeta) & J_{12}^i(\zeta) & J_{13}^i(\zeta)\\
-\big(J_{12}^i(\zeta)\big)^\top & 0 & J_{23}^i(\zeta)
\end{pmatrix}\begin{pmatrix}
\varepsilon_0\\
\varepsilon_r-1\\
M\omega 
\end{pmatrix} = B^\perp \ell^i(\zeta).
\end{align*}
Accordingly, for the first terms, one gets
\begin{align*}
    u^1_\text{es}(\zeta) =& -\frac{1}{4}M^{-1}\big(S(\omega)q_0 + q_r \omega\big)\\
    u^2_\text{es}(\zeta) =& \frac{3}{8}\big( \frac{1}{2}S(\dot \omega_0) -S^{2}(\omega) \big)M^{-1}q_0  + \frac{1}{2} \Big(\frac{1}{8}\big(3 I\\ & -M^{-1}\big)\omega \omega^\top - M^{-1}S(\dot \omega_0) - \frac{3}{8}M^{-1}S^2(\omega) \Big)q_0 \\ &- \frac{3}{8}M^{-1}\dot \omega_0 q_r
    \end{align*}
     $J^1_d(\zeta) = J_d(\dot \omega_0)$ and, using \eqref{eq:Jdi} for $i = 2$$,    J^{2}_{23}(\zeta) = \mathbf{0}$
    \begin{align*}
    J^2_{11}(\zeta) =&\frac{1}{2} S(\ddot \omega_0) + \frac{1}{8} M^{-1}S(\omega) M^{-1}+ \frac{3}{4}S(M^{-1}u^1_\text{es}(\zeta))\\ &
    -\frac{1}{6}S^{3}(\omega) +\frac{1}{2}\Big( 
    S(\omega) S(\dot \omega_0) - S(\dot \omega_0) S(\omega) \\ &+ \omega \dot \omega_0^\top - \dot \omega_0^\top \omega 
    \Big)\\
    J^2_{12}(\zeta) =& \frac{3}{4}M^{-1}u^1_\text{es}(\zeta ) + \frac{1}{16}\omega \omega^\top \omega + \frac{1}{2}\ddot \omega_0
    \\
    J^{2}_{13}(\zeta) =& \frac{1}{16}\omega \omega^\top M^{-1}+ \frac{1}{2}M^{-1}
    S^{2}(\omega) \\ & +\frac{1}{8}\Big( M^{-1}S(\dot \omega_0) - S(\dot \omega_0) M^{-1}
    \Big)\\
    J^{2}_{33}(\zeta) =& S(\ddot \omega_0) + \frac{3}{2}S(M^{-1}u^1_\text{es}(\zeta) -\frac{1}{2}S^3(\omega) \\ &+\frac{1}{2}
    \Big(S(\dot \omega_0) S(\omega) - S(\omega)S(\dot \omega_0) \Big)
\\
     \dot \omega_0 =& M^{-1} S(\omega)M\omega - \frac{1}{2}M^{-2}q_0\\
     \ddot \omega_0 =& M^{-1}S(\dot \omega_0) M \omega + M^{-1}S(\omega) \big(
     S(\omega)M\omega \!-\! \frac{1}{2}M^{-1}\varepsilon_0
     \big).
\end{align*}

Along the same lines, exact forms for the damping control in Theorem \ref{th:fin_sd} cannot be computed in practice. However, all terms of the damping injection component in \eqref{eq:u_di_ser} can be easily computed through an iterative procedure solving a linear equality in the corresponding unknown. Such an equality is deduced substituting \eqref{eq:u_di_ser} into \eqref{eq:u_di} so getting, for the first terms
\begin{align*}
    u_\text{di}^1(\zeta)   &= - \kappa_\text{di} M^{-1}\big( S(\omega) M - \kappa_\text{di}\big)\omega + \frac{\kappa_\text{di}}{2}M^{-2}\varepsilon_0\\
    u_\text{di}^2(\zeta) &= \dot u_\text{di}^1(\zeta)  - \kappa_\text{di} M\big( 
     u_\text{es}^1(\zeta) + \frac{1}{2} u_\text{di}^1(\zeta)
    \big)
\end{align*}
with, recalling the continuous-time control \eqref{eq:uida_ct},
\begin{align*}
   \dot u_\text{di}^1(\zeta) =& \frac{\partial u_\text{di}^1(\zeta) }{\partial \zeta }\Big( f(\zeta) + Bu_\text{ida}(\zeta) \Big).
\end{align*}
With this in mind, only IDA-PBC control laws defined as truncations, at all desired finite order $p\geq0$, of the corresponding series expansions \eqref{eq:JdUes_b} and \eqref{eq:u_di_ser} can be implemented in practice; namely, the $p^\text{th}$-order approximate IDA-PBC feedback is defined as
\begin{align} \label{eq:app_mech}
    u^{\delta, [p]}_\text{ida}(\zeta) = \sum_{\ell = 0}^{p} \frac{\delta^\ell}{(\ell+1)!}(u^\ell_\text{es}(\zeta) + u^\ell_\text{di}(\zeta)), p \geq 0.
\end{align}
For $p = 0$ \eqref{eq:app_mech} one recovers the  so-called emulation of \eqref{eq:uida_ct}, that is the continuous-time control implemented via sample-and-hold devices and no redesign taking into account the effect of sampling (see e.g., \cite{mazenc2013robustness}).
Such controllers ensure practical asymptotic stability of the desired equilibrium in closed loop \cite[~Proposition~4.2]{mattioni2017immersion}; i.e., trajectories converge to a neighborhood of the desired attitude with radius $\delta^{i+1}$.

\begin{remark}\label{rmk:meas_const}
The terms above highlight that, contrarily to the continuous-time case, the sampled-data feedback law explicitly depends on the sign of the quaternion which must be then consistently reconstructed from measurements.
\end{remark}

\begin{figure*}
\centering
\SetFigLayout{2}{2}
  \subfigure[Continuous-time IDA-PBC]{\includegraphics{./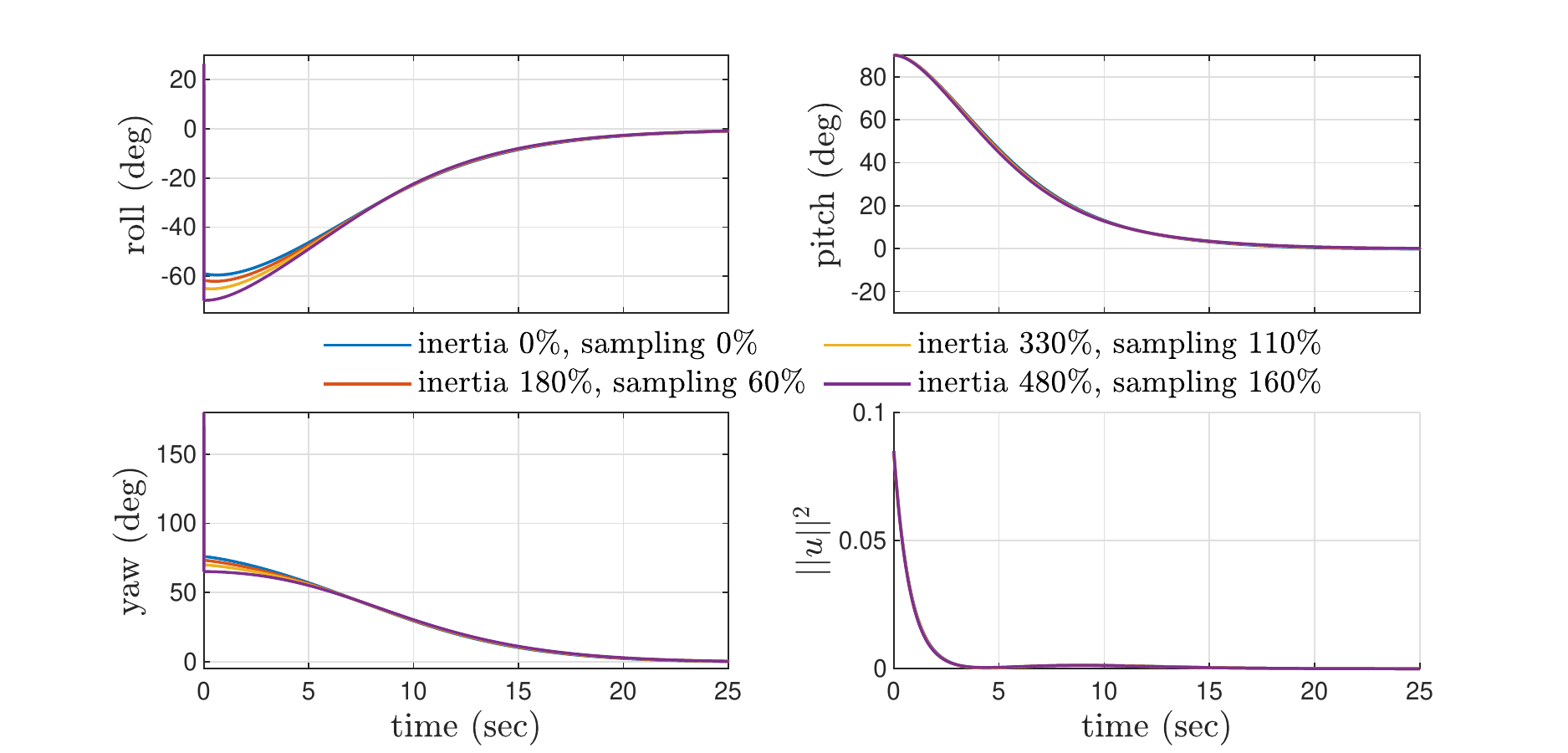}}\hspace{-2em}
  \subfigure[$p=0$]{\includegraphics{./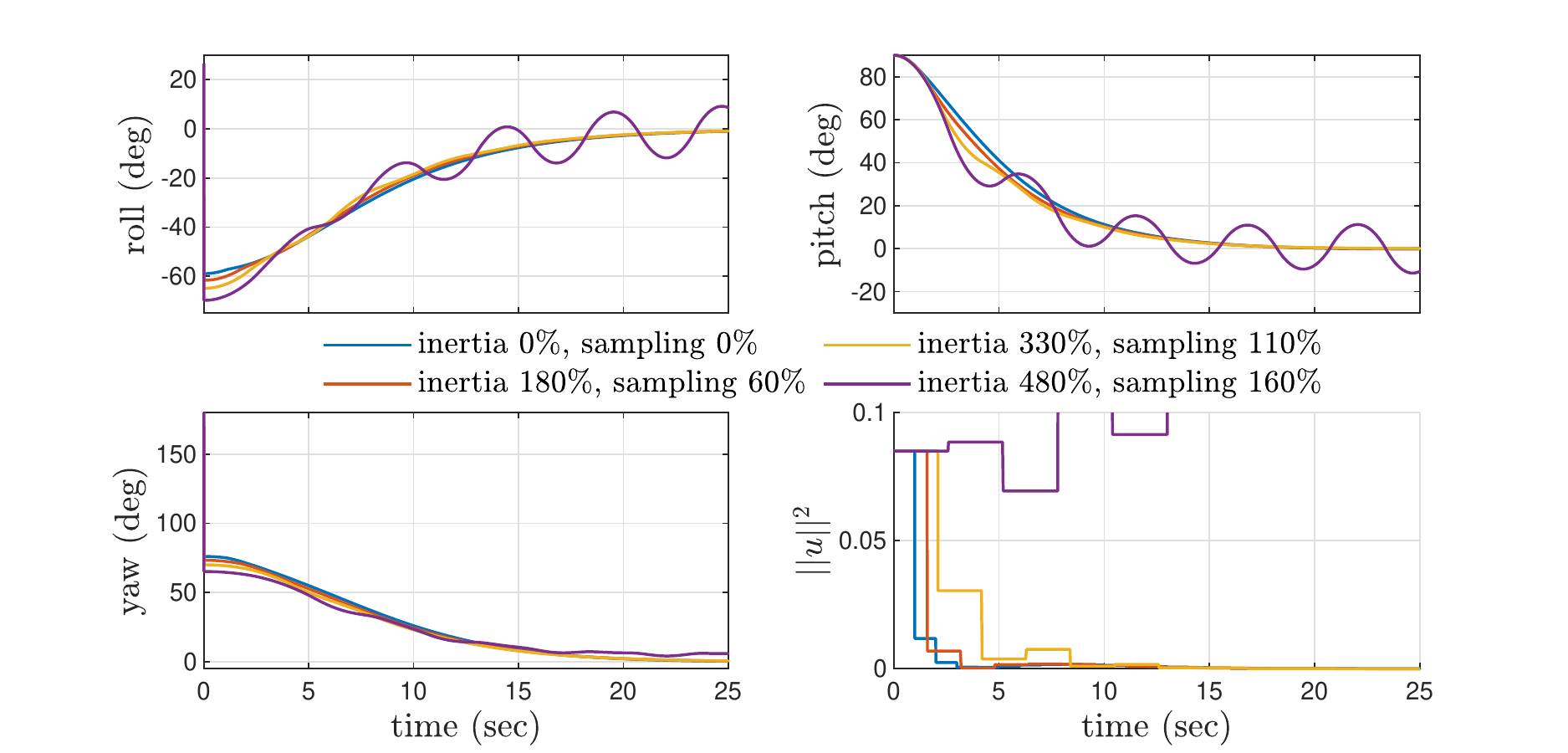}}\vspace{-0.659em}
  
\subfigure[$p=2$]{\includegraphics{./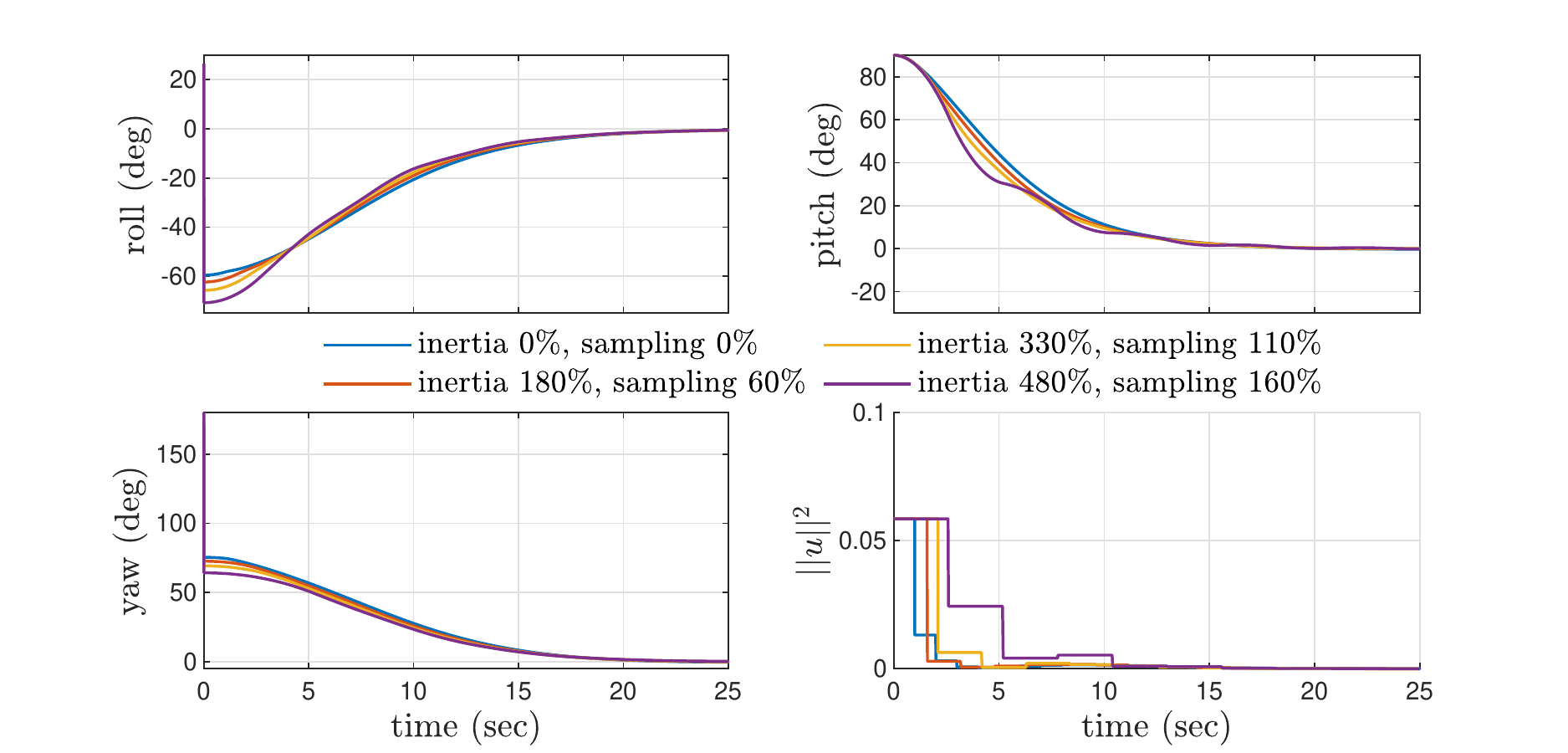}}\hspace{-2em}
  \subfigure[Sampled-data linearized LQR]{\includegraphics{./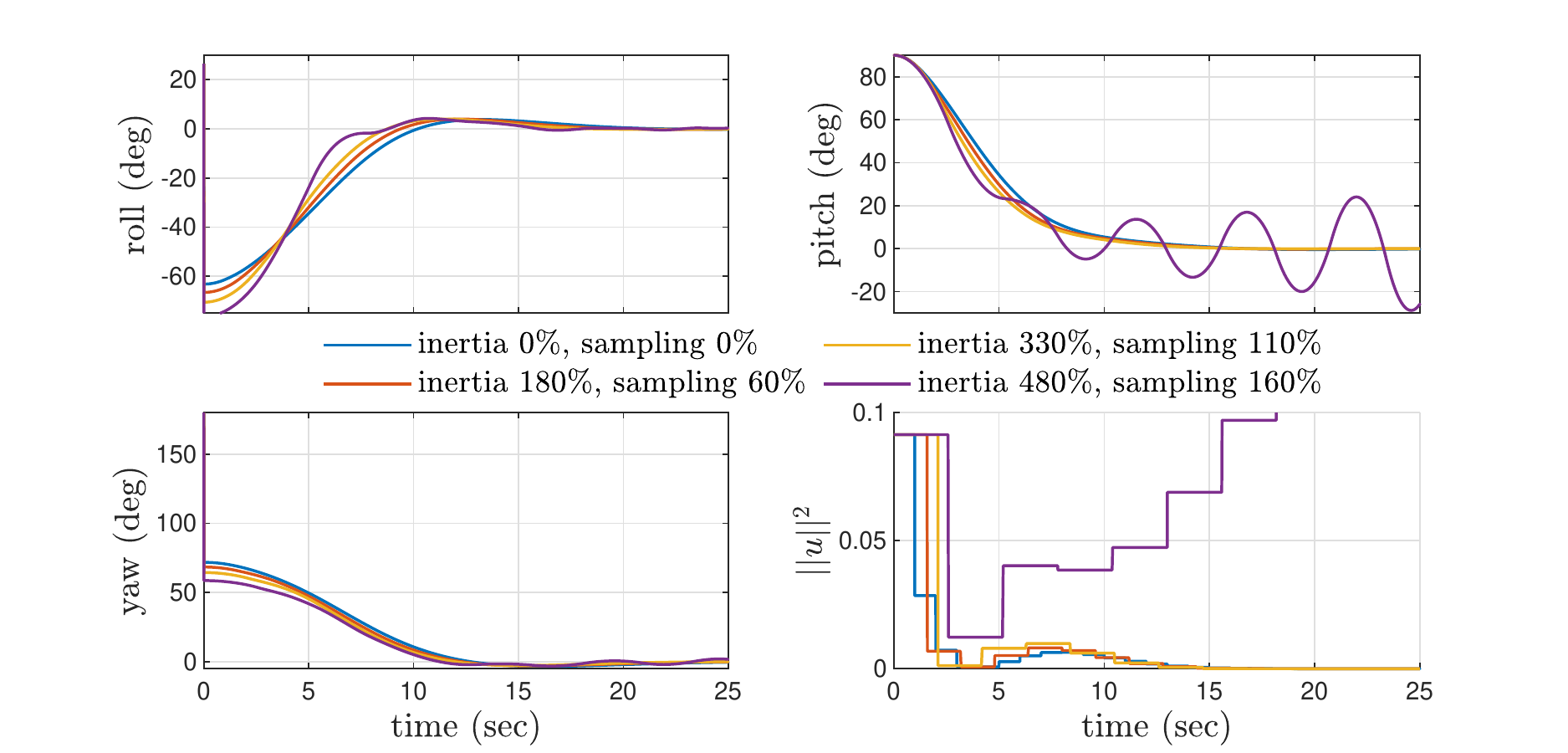}}
  \vspace{-0.5em}
\caption{Stabilization results for nominal values and in presence of uncertainties
}
\label{Figure3}
\end{figure*}

\section{Simulations}\label{sec:sim}
In this section, we consider an attitude maneuver toward $q_0^\star=\mathbf{0}$ of a rigid body having inertia matrix
\begin{align}\label{ex_inertia}
    M = \begin{pmatrix}
    1.42 & 0.00867 & 0.01357 \\ 
    0.00867 & 1.73 & 0.06016  \\ 
    0.01357 & 0.06016 & 2.03
    \end{pmatrix},
\end{align}
assuming initial $\omega(0)=\mathbf{0}$ and roll, pitch, yaw angles $(\phi,\psi,\theta)=(\frac{\pi}{4},\frac{\pi}{2},\pi)$ (corresponding to $q_0(0)=(-0.6533,
    0.2706,
    0.6533
    )^\top$ and $q_r(0)=
    0.2706$), and $\delta=1$\footnote{ Further simulations, with rendering videos, are available at \href{https://youtu.be/u3BhtW-PGPM}{https://youtu.be/u3BhtW-PGPM} where several sampling periods are considered.}.  For the sake of comparison, in the simulations we have considered percentages of uncertainties arising in the elements $M_{22},M_{23},M_{32}$ of the inertia matrix \eqref{ex_inertia} and in the sampling period $\delta$. In particular, simulations in Fig. \ref{Figure3} show the behaviour of the continuous-time controller \eqref{eq:uida_ct}, for $\kappa_\text{di}=\diag(1.1, 0.7, 0.9)$, and compared with the sampled-data controller \eqref{eq:app_mech} for different approximation orders (such as $p=0$ and $p=2$), and the sampled LQR control in \cite{jiang2017sampled}. 
    
    Fig. \ref{Figure3} highlights that although in the nominal case all the controllers achieve $q_0^\star$ with similar performances, as uncertainties occur into the model all the controllers behave differently. In particular, the continuous-time IDA-PBC shows a robust behaviour also with respect to large uncertainties as depicted in Fig. \ref{Figure3}a. Differently, both the sampled-data emulation design, i.e. \eqref{eq:app_mech} with $p=0$ (Fig. \ref{Figure3}b), and the sampled LQR (Fig. \ref{Figure3}d) show a significantly reduced robusteness to parametric uncertainties leading to instability for large uncertainties in both the sampling period and the inertia matrix. Finally, as readily seen in Fig. \ref{Figure3}c, the sensitivity to uncertainty is reduced when considering two correcting terms in the sampled-data emulation design, i.e. \eqref{eq:app_mech} with $p=2$. In fact, the sampled-data controller with $p=2$ shows in all situations comparable performance with the continuous-time one and achieves stabilization even for larger uncertainties.

\section{Conclusions and perspectives} \label{sec:conclusions}

A new quaternion-based digital control for attitude stabilizationhas been proposed. The solution involves, for the first time, discrete-time IDA-PBC over the sampled-data equivalent model so providing, in principle, a simple controller exploiting the physical properties of the dynamics. The feedback gets the form of a series expansion in powers of $\delta$ so that approximations can be naturally and efficiently applied in practice. Future works involve the case of input delays and the extension to formation control of swarms of satellites \cite{wertz1999space}. Also, the extension of IDA-PBC  for general representations of the kinematics is undergoing so to cope with unwinding or singularity phenomena typically rising with parametrizations \cite{bhat2000topological}.

\section*{Acknowledgement}
The Authors wish to thank the Associate Editor and the Reviewers for their valuable comments and suggestions which allowed to improve the quality of the paper.

\bibliographystyle{IEEEtran}      
\bibliography{biblio}                  

\begin{thebibliography}{10}
\providecommand{\url}[1]{#1}
\csname url@samestyle\endcsname
\providecommand{\newblock}{\relax}
\providecommand{\bibinfo}[2]{#2}
\providecommand{\BIBentrySTDinterwordspacing}{\spaceskip=0pt\relax}
\providecommand{\BIBentryALTinterwordstretchfactor}{4}
\providecommand{\BIBentryALTinterwordspacing}{\spaceskip=\fontdimen2\font plus
\BIBentryALTinterwordstretchfactor\fontdimen3\font minus
  \fontdimen4\font\relax}
\providecommand{\BIBforeignlanguage}[2]{{%
\expandafter\ifx\csname l@#1\endcsname\relax
\typeout{** WARNING: IEEEtran.bst: No hyphenation pattern has been}%
\typeout{** loaded for the language `#1'. Using the pattern for}%
\typeout{** the default language instead.}%
\else
\language=\csname l@#1\endcsname
\fi
#2}}
\providecommand{\BIBdecl}{\relax}
\BIBdecl

\bibitem{kim2007robust}
S.-J. Kim, C.-K. Ryoo, and K.~Choi, ``Robust attitude control via quaternion
  feedback linearization,'' in \emph{SICE Annual Conference 2007}.\hskip 1em
  plus 0.5em minus 0.4em\relax IEEE, 2007, pp. 2234--2239.

\bibitem{navabi2017spacecraft}
M.~Navabi and M.~Hosseini, ``Spacecraft quaternion based attitude input-output
  feedback linearization control using reaction wheels,'' in \emph{2017 8th
  International Conference on Recent Advances in Space Technologies
  (RAST)}.\hskip 1em plus 0.5em minus 0.4em\relax IEEE, 2017, pp. 97--103.

\bibitem{mino2019backstepping}
A.~Mino, K.~Uchiyama, and K.~Masuda, ``Backstepping control for satellite
  attitude control using spherical control moment gyro,'' in \emph{2019 SICE
  International Symposium on Control Systems (SICE ISCS)}.\hskip 1em plus 0.5em
  minus 0.4em\relax IEEE, 2019, pp. 39--42.

\bibitem{zhu2011adaptive}
Z.~Zhu, Y.~Xia, and M.~Fu, ``Adaptive sliding mode control for attitude
  stabilization with actuator saturation,'' \emph{IEEE Transactions on
  Industrial Electronics}, vol.~58, no.~10, pp. 4898--4907, 2011.

\bibitem{wen1991attitude}
J.-Y. Wen and K.~Kreutz-Delgado, ``The attitude control problem,'' \emph{IEEE
  Transactions on Automatic control}, vol.~36, no.~10.

\bibitem{hassrizal2016survey}
H.~Hassrizal and J.~Rossiter, ``A survey of control strategies for spacecraft
  attitude and orientation,'' in \emph{2016 UKACC 11th international conference
  on control (CONTROL)}.\hskip 1em plus 0.5em minus 0.4em\relax IEEE, 2016, pp.
  1--6.

\bibitem{lizarralde1996attitude}
F.~Lizarralde and J.~T. Wen, ``Attitude control without angular velocity
  measurement: A passivity approach,'' \emph{IEEE transactions on Automatic
  Control}, vol.~41, no.~3, pp. 468--472, 1996.

\bibitem{tsiotras1998further}
P.~Tsiotras, ``Further passivity results for the attitude control problem,''
  \emph{IEEE Transactions on Automatic Control}, vol.~43, no.~11.

\bibitem{di2003passive}
S.~Di~Gennaro, ``Passive attitude control of flexible spacecraft from
  quaternion measurements,'' \emph{Journal of optimization theory and
  applications}, vol. 116, no.~1, pp. 41--60, 2003.

\bibitem{qasim2017pid}
M.~Qasim, E.~Susanto, and A.~S. Wibowo, ``Pid control for attitude
  stabilization of an unmanned aerial vehicle quad-copter,'' in \emph{2017 5th
  International Conference on Instrumentation, Control, and Automation
  (ICA)}.\hskip 1em plus 0.5em minus 0.4em\relax IEEE, 2017, pp. 109--114.

\bibitem{hughes2012spacecraft}
P.~C. Hughes, \emph{Spacecraft attitude dynamics}.\hskip 1em plus 0.5em minus
  0.4em\relax Wiley, 2012.

\bibitem{bhat2000topological}
S.~P. Bhat and D.~S. Bernstein, ``A topological obstruction to continuous
  global stabilization of rotational motion and the unwinding phenomenon,''
  \emph{Systems \& Control Letters}, vol.~39, no.~1.

\bibitem{chaturvedi2011rigid}
N.~A. Chaturvedi, A.~K. Sanyal, and N.~H. McClamroch, ``Rigid-body attitude
  control,'' \emph{IEEE Control Systems Magazine}, vol.~31, no.~3.

\bibitem{monaco2007advanced}
S.~Monaco and D.~Normand-Cyrot, ``Advanced tools for nonlinear sampled-data
  systems’ analysis and control,'' \emph{European journal of control},
  vol.~13, no. 2-3, pp. 221--241, 2007.

\bibitem{monaco1986linearizing}
S.~Monaco, D.~Normand-Cyrot, and S.~Stornelli, ``On the linearizing feedback in
  nonlinear sampled data control schemes,'' in \emph{1986 25th IEEE Conference
  on Decision and Control}, 1986, pp. 2056--2060.

\bibitem{mattioni2017immersion}
M.~Mattioni, S.~Monaco, and D.~Normand-Cyrot, ``Immersion and invariance
  stabilization of strict-feedback dynamics under sampling,''
  \emph{Automatica}, vol.~76, pp. 78--86, 2017.

\bibitem{jiang2017sampled}
B.~Jiang, Y.~Liu, and K.~I. Kou, ``Sampled-data control for spacecraft attitude
  control systems based on a quaternion model,'' in \emph{2017 Chinese
  Automation Congress (CAC)}.\hskip 1em plus 0.5em minus 0.4em\relax IEEE,
  2017, pp. 4297--4300.

\bibitem{moreschini2020stabilization}
A.~Moreschini, M.~Mattioni, S.~Monaco, and D.~Normand-Cyrot, ``Stabilization of
  discrete port-hamiltonian dynamics via interconnection and damping
  assignment,'' \emph{IEEE Control Systems Letters}, vol.~5, no.~1, pp.
  103--108, 2020.

\bibitem{moreschini2019discrete}
------, ``Discrete port-controlled {H}amiltonian dynamics and average
  passivation,'' in \emph{58th IEEE Conference on Decision and Control (CDC)},
  2019, pp. 1430--1435.

\bibitem{9749871}
S.~Monaco, D.~Normand-Cyrot, M.~Mattioni, and A.~Moreschini, ``Nonlinear
  hamiltonian systems under sampling,'' \emph{IEEE Transactions on Automatic
  Control}, pp. 1--1, 2022.

\bibitem{ortega2002interconnection}
R.~Ortega, A.~Van Der~Schaft, B.~Maschke, and G.~Escobar, ``Interconnection and
  damping assignment passivity-based control of port-controlled {H}amiltonian
  systems,'' \emph{Automatica}, vol.~38, no.~4, pp. 585--596, 2002.

\bibitem{mazenc2016quaternion}
F.~Mazenc, S.~Yang, and M.~R. Akella, ``Quaternion-based stabilization of
  attitude dynamics subject to pointwise delay in the input,'' \emph{Journal of
  Guidance, Control, and Dynamics}, vol.~39, no.~8, 2016.

\bibitem{monaco2011nonlinear}
S.~Monaco and D.~Normand-Cyrot, ``Nonlinear average passivity and stabilizing
  controllers in discrete time,'' \emph{Systems \& Control Letters}, vol.~60,
  no.~6, pp. 431--439, 2011.

\bibitem{mazenc2013robustness}
F.~Mazenc, M.~Malisoff, and T.~N. Dinh, ``Robustness of nonlinear systems with
  respect to delay and sampling of the controls,'' \emph{Automatica}, vol.~49,
  no.~6, pp. 1925--1931, 2013.

\bibitem{wertz1999space}
J.~R. Wertz, W.~J. Larson, D.~Kirkpatrick, and D.~Klungle, \emph{Space mission
  analysis and design}.\hskip 1em plus 0.5em minus 0.4em\relax Springer, 1999,
  vol.~8.

\end{thebibliography}

\end{document}